\documentclass[letterpaper, 10 pt, conference]{ieeeconf}

\IEEEoverridecommandlockouts
\usepackage[T1]{fontenc}

\usepackage{amsmath}
\usepackage{amssymb}

\usepackage{amsthm}
\usepackage{graphicx}
\usepackage[dvipsnames]{xcolor}
\usepackage[normalem]{ulem}
\usepackage{balance}

\newtheorem{theorem}{Theorem}
\newtheorem{definition}{Definition}

\newtheorem{proposition}{Proposition}

\newtheorem{assumption}{Assumption}
\usepackage{algorithm}
\usepackage{algpseudocode}

\title{\LARGE \bf
Minimal Information Control Invariance via Vector Quantization
}
\begin{document}

\newcommand{\teodor}[1]{\textcolor{ForestGreen}{#1}}
\newcommand{\sayan}[1]{\textcolor{blue}{#1}}

\author{Ege Yuceel$^{1}$, Teodor Tchalakov$^{1}$, and Sayan Mitra$^{1}$%
\thanks{$^{1}$Department of Electrical and Computer Engineering, University of Illinois at Urbana-Champaign, Urbana, IL 61801, USA. {\tt\small \{eyceel2, ttcha2, mitras\}@illinois.edu}}}

\maketitle

\thispagestyle{empty}
\pagestyle{empty}

%%%%%%%%%%%%%%%%%%%%%%%%%%%%%%%%%%%%%%%%%%%%%%%%%%%%%%%%%%%%%%%%%%%%%%%%%%%%%%%%
\begin{abstract}
Safety-critical autonomous systems must satisfy hard state constraints under tight computational and sensing budgets, yet learning-based controllers are often far more complex than safe operation requires. To formalize this gap, we study how many distinct control signals are needed to render a compact set forward invariant under sampled-data control, connecting the question to the information-theoretic notion of invariance entropy. We propose a vector-quantized autoencoder that jointly learns a state-space partition and a finite control codebook, and develop an iterative forward certification algorithm that uses Lipschitz-based reachable-set enclosures and sum-of-squares programming. On a 12-dimensional nonlinear quadrotor model, the learned controller achieves a $157\times$ reduction in codebook size over a uniform grid baseline while preserving invariance, and we empirically characterize the minimum sensing resolution compatible with safe operation.
\end{abstract}
\section{Introduction}

Autonomous systems such as quadrotors~\cite{mahony2012multirotor}, fixed-wing UAVs~\cite{beard2012small}, and embedded robotic platforms must satisfy hard safety constraints while operating under tight computational, sensing, and communication budgets. Controllers for such platforms often consume far more resources than may be necessary: vision-based obstacle avoidance on a nano-drone has been achieved by distilling a 3.1M-parameter depth network down to a 310K-parameter model~\cite{zhang2024end}, pruning techniques can reduce trained network parameter counts by over $90\%$ without loss of accuracy~\cite{frankle2018lottery,han2015learning}, and dynamic obstacle avoidance on quadrotors has been demonstrated using the sparse event stream of an event camera in place of dense video~\cite{falanga2020dynamic}. This suggests that the resources strictly necessary for safe operation may be far smaller than what current practice allocates. There are many notions of minimality one could pursue; in this paper we focus on one: what is the minimum number of distinct control signals needed to guarantee that the system remains inside a prescribed safe set?

Two threads of prior work address aspects of this question. In
control theory, invariance~\cite{colonius2009invariance, kawan2013invariance} and stabilization~\cite{colonius2012stabilization} entropy characterize the minimum data rate necessary over the control channel for
achieving their corresponding performance. These results provide fundamental lower bounds but  do not yield constructive procedures for designing controllers that
achieve or even approach these lower bounds. One such controller for LTI systems is~\cite{hespanha2002towards}, which proposes a practical encoding/decoding scheme. On the computer vision side, reducing sensor resolution directly lowers power consumption and onboard computation. Recent work shows that even extremely low-resolution sensors can solve vision-based control tasks when their design is optimized computationally~\cite{atanov2024far, klotz2024minimalist}, but these efforts treat sensing in isolation and do not provide guarantees about safety or invariance. When the control channel also operates at a minimal bitrate, the sensor need only distinguish among fewer actions, which can relax the sensing requirements in turn.

\begin{figure}[t]
    \centering
    \includegraphics[width=0.97\linewidth]{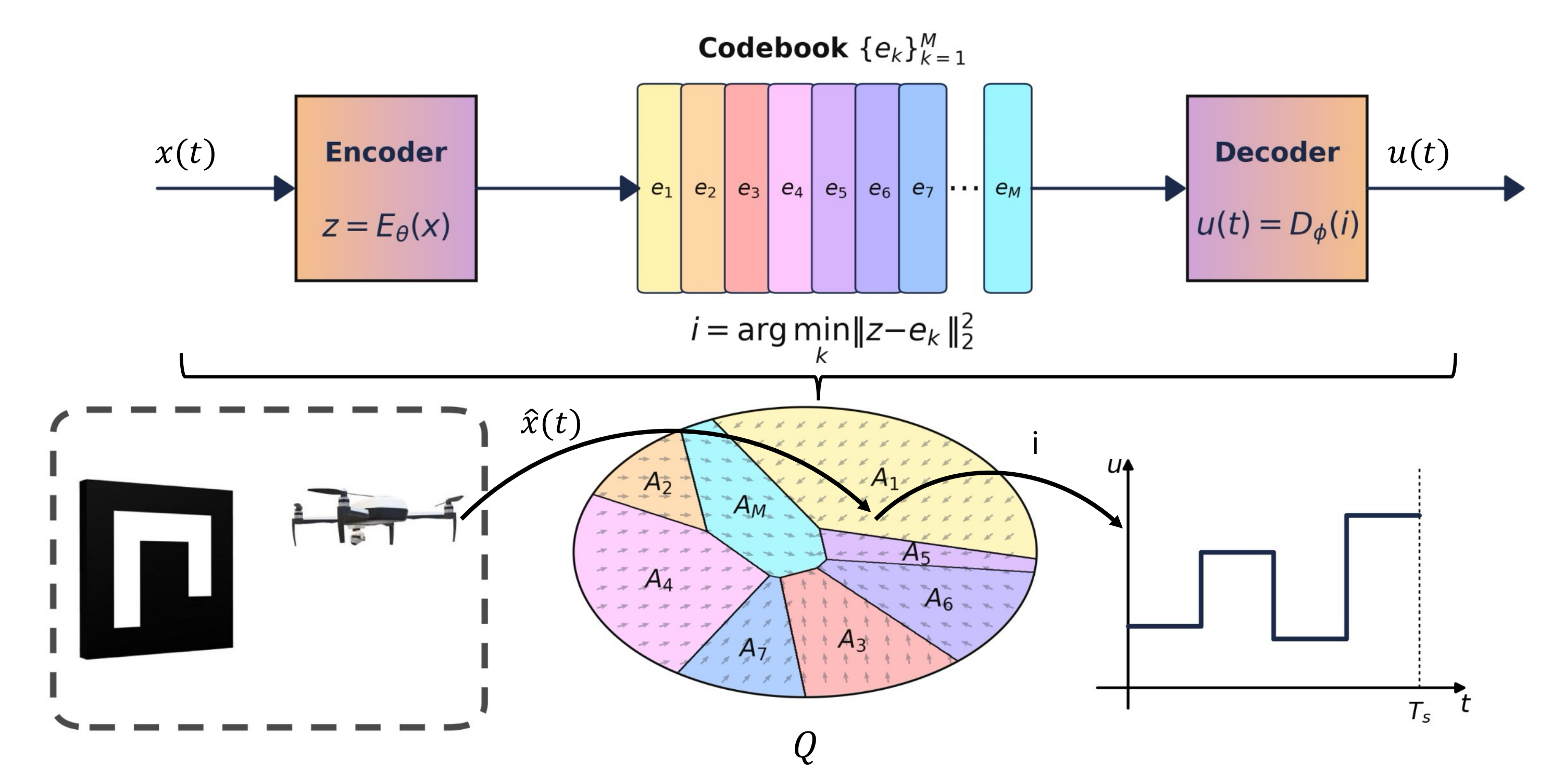}
    \caption{VQ-AE controller architecture. The quadrotor estimates its state $\hat{x}(t) \in \mathbb{R}^{12}$ from images of an ArUco marker. The encoder $E_\theta$ maps $\hat{x}(t)$ to a latent vector $z$, quantized to the nearest codebook entry $i = \arg\min_k \|z - e_k\|_2^2$, inducing a partition $\{A_i\}_{i=1}^M$ of $Q$. A 2D projection of the partition is shown; the full 12-dimensional partition contains additional regions. Arrows display the controlled vector field within each region, directing trajectories inward to maintain invariance of $Q$. The decoder $D_\phi$ maps index $i$ to an open-loop control applied over $[t_k, t_k + T_s)$.}
    \label{fig:}
\end{figure}

In this paper, we study the following problem: Given a nonlinear
dynamical system $\dot{x} = f(x, u)$
and a candidate invariant set $Q \subset \mathbb{R}^n$, what is the {\em minimum number\/} of distinct open-loop control signals $u(\cdot)$ that is sufficient to render $Q$ forward invariant
under sampled-data control for a given period $T_s$? We define the \emph{complexity} of a controller as the number of distinct open-loop control signals it uses. Minimizing this complexity is therefore practically meaningful: a simpler controller requires fewer distinct signals to be distinguished, stored, and transmitted, reducing demands on sensors, communication networks, and onboard computation. Sufficient conditions for the existence of minimal controllers have been established~\cite{colonius2009invariance, kawan2013invariance} through, for example, grid partitions and Lipschitz-type arguments; related minimum data-rate problems for estimation have been analyzed via matrix measure-based entropy bounds~\cite{liberzon2016entropy, sibai2018state}.

Concretely, we present a method that connects invariance entropy theory to constructive learning-based synthesis in three steps. (1) We learn a compact partition of the state space and an associated control codebook via a vector quantized autoencoder (VQ-AE), (2) we certify the learned controller through an iterative forward certification (IFC) algorithm, and (3) we empirically characterize the sensing resolution compatible with safe operation. The resulting controller achieves invariance with orders-of-magnitude fewer control signals.

The encoder $E_\theta: \mathbb{R}^n \to \mathcal{Z}$ maps each state to a discrete index in a finite codebook $\mathcal{Z} = \{1,\dots,M\}$, inducing a partition on $Q$. The decoder $D_\phi: \mathcal{Z} \to \mathcal{U}_{T_s}$ assigns each region a fixed open-loop control sequence over the sampling horizon $T_s$. We design the training objective to balance four competing goals: an invariance loss penalizes trajectories that leave $Q$, a codebook entropy loss regularizes the number of active codes~\cite{wang2026theoretically}, a commitment loss anchors encoder outputs to their assigned codebook vectors via the straight-through estimator~\cite{bengio2013estimating, van2017neural}, and a control effort loss prevents excessive control signals. The result is a compact codebook of open-loop control sequences sufficient to keep the system within $Q$.

While data-driven controllers have demonstrated strong empirical performance in complex control tasks, formal safety guarantees are rare~\cite{brunke2022safe}, and to the best of our knowledge no VQ-based representation carries a certificate of set invariance. We address this gap by imposing structure on the encoder while leaving the decoder unrestricted: when the encoder is an invertible affine map, nearest-neighbor code assignment induces a convex partitioning in state space, enabling region-wise verification. The IFC algorithm propagates polyhedral reachable-set enclosures through each control segment using Lipschitz-based bloating and certifies invariance at each step via a sum-of-squares (SOS) feasibility program~\cite{parrilo2003semidefinite}. We prove that if the algorithm terminates successfully, the closed-loop system is forward invariant on $Q$, and demonstrate certification on the 12-dimensional quadrotor.

Finally, we connect control minimality to sensing. Since the controller selects among finitely many distinct controls, the observer need only determine at the sampling period $T_s$ which partition region the state belongs to, not reconstruct the full state. This relaxation suggests that the sensing resolution required for invariance is governed by the partition granularity. We empirically identify the minimum monocular camera resolution sufficient to support invariance on the quadrotor by training an observer network that estimates the state from images of an ArUco marker~\cite{garrido2014automatic} at progressively coarser resolutions and recording the threshold at which closed-loop safety degrades.

The remainder of the paper is organized as follows.
Section~II introduces the problem formally and defines the invariance
objective. Section~III reviews invariance entropy and spanning sets.
Section~IV presents the VQ-AE training procedure and controller design.
Section~V develops the iterative forward certification algorithm and
its correctness proof. Section~VI reports empirical results on a
12-dimensional nonlinear quadrotor, including baseline comparisons,
codebook compression, and the minimum sensing resolution compatible
with invariance.

\section{Problem Definition}
Consider the nonlinear continuous-time system
\begin{equation} 
\dot{x}(t) = f(x(t),u(t)),
\label{eq:plant}
\end{equation}
where $x(t) \in \mathbb{R}^n$, $u(t) \in \mathcal{U} \subset \mathbb{R}^m$,
and $f:\mathbb{R}^n \times \mathcal{U} \to \mathbb{R}^n$ is Lipschitz in $x$.
Let $Q \subset \mathbb{R}^n$ be a compact set with nonempty interior.
The objective is to design a sampled-data controller with sampling
period $T_s > 0$ such that
\begin{equation}
x(0)\in Q \;\Rightarrow\; x(t)\in Q \quad \forall\, t\ge 0,
\label{eq:invariance}
\end{equation}
using the minimum number of distinct open-loop control signals.

\section{Preliminaries}
\textbf{Notation.}
For $T>0$, let $\mathcal{U}_T := L^\infty([0,T],\mathcal{U})$ denote
the space of admissible open-loop control signals of duration $T$, and
let $\|\cdot\|$ denote the Euclidean norm. For an initial condition
$x_0 \in \mathbb{R}^n$ and input $u(\cdot) \in \mathcal{U}_T$, let
$\varphi(t, x_0, u)$ denote the unique solution of~\eqref{eq:plant} at
time $t \ge 0$, and write $\varphi(t, A, u) := \{\varphi(t, x_0, u) :
x_0 \in A\}$ for a bounded set $A \subset \mathbb{R}^n$. The Lipschitz constant
of $f(\cdot, u)$ over a bounded set $A$ for fixed $u$ is denoted
$L(f(\cdot, u), A)$.

\begin{assumption}[Weak Invariance]
\label{ass:weak_invariance}
For every $x \in Q$, there exists $u_x \in \mathcal{U}_{T_s}$ such that
\[
\varphi(t, x, u_x) \in Q \quad \forall\, t \in [0, T_s].
\]
\end{assumption}

Assumption~1 is necessary because without a feasible control for every state in $Q$, no controller can render $Q$ invariant, even with infinitely many open-loop control signals.

\textbf{Minimal Controllers and Invariance Entropy.}
Fix $T>0$. By Assumption~\ref{ass:weak_invariance}, every state
$x\in Q$ admits at least one control $u\in\mathcal{U}_T$ keeping it
inside $Q$ over $[0,T]$. The question is whether a \emph{finite}
collection of such signals suffices: if $Q$ can be covered by regions,
each assigned a single control that keeps every state in that region
inside $Q$ over the horizon $T$, then invariance is achievable with
a finite set of open-loop controls \cite{colonius2009invariance}.
\begin{definition}[$T$-Spanning Set]
Let $T>0$.
A finite set $\mathcal V_T \subset \mathcal U_T$
is called \emph{$T$-spanning} for $Q$
if there exists an open cover $\mathcal A$ of $Q$
and a selection map
$\alpha:\mathcal A \to \mathcal V_T$
such that
\[
\varphi(t,A,\alpha(A)) \subset Q
\quad \forall A\in\mathcal A,\ \forall t\in[0,T],
\]
where
\[
\varphi(t,A,u)
:=
\{\varphi(t,x,u):x\in A\}.
\]
The minimal cardinality of such a family is
\[
r_{\mathrm{inv}}(T,Q)
:=
\min
\bigl\{
|\mathcal V_T| :
\mathcal V_T \text{ is $T$-spanning for } Q
\bigr\}.
\]
\end{definition}
The quantity $r_{\mathrm{inv}}(T,Q)$ measures
the minimal number of open-loop control signals
of duration $T$ required to guarantee invariance of $Q$.
As $T$ increases, this number may grow.
Its asymptotic growth rate is captured by the following definition from~\cite{kawan2013invariance}.
\begin{definition}[Invariance Entropy]
The asymptotic growth rate of $r_{\mathrm{inv}}(T,Q)$ is called the
\emph{invariance entropy} of the system on $Q$ and is defined by
\[
h_{\mathrm{inv}}(Q)
:=
\limsup_{T\to\infty}
\frac{1}{T}
\log r_{\mathrm{inv}}(T,Q).
\]
\end{definition}
\begin{proposition}
\label{prop:finite_horizon_inv_entropy}
\[
h_{\mathrm{inv}}(Q)
=
\inf_{T>0}
\frac{1}{T}
\log r_{\mathrm{inv}}(T,Q).
\]
\end{proposition}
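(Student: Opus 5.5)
The plan is to show that the function $a(T):=\log r_{\mathrm{inv}}(T,Q)$ is subadditive in $T$, and then apply a continuous-time version of Fekete's lemma. That lemma gives $\lim_{T\to\infty} a(T)/T = \inf_{T>0} a(T)/T$, so the limsup in the definition of $h_{\mathrm{inv}}(Q)$ is actually a limit and equals the infimum.

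\emph{Step 1: finiteness.} First we check that $r_{\mathrm{inv}}(T,Q)<\infty$ for every $T>0$. Otherwise $a(T)=+\infty$ and the argument degenerates. This follows by concatenating the controls $u_x$ of Assumption~\ref{ass:weak_invariance} and using continuous dependence on initial conditions. For each $x\in Q$, the trajectory stays in $Q$ over $[0,T_s]$. By Lipschitz continuity, nearby states remain close, so we obtain open neighbourhoods, and compactness of $Q$ then yields a finite subcover.

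This step is the delicate point. Nearby trajectories remain close but need not stay inside $Q$ if $\varphi(\cdot,x,u_x)$ touches $\partial Q$. So I expect to need either a slightly stronger version of the assumption (strict invariance into the interior), or to treat this as the standing finiteness hypothesis implicit in the definitions, as in \cite{kawan2013invariance}. I would state this hypothesis explicitly: $r_{\mathrm{inv}}(T,Q)<\infty$ for all $T$.

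\emph{Step 2: subadditivity.} Let $\mathcal V_{T_1}$ with cover $\mathcal A_1$ and $\mathcal V_{T_2}$ with cover $\mathcal A_2$ be minimal spanning sets. Define the new cover of $Q$ by the sets
\[
A_{1}\cap \varphi(T_1,\cdot,\alpha_1(A_1))^{-1}(A_{2}),
\]
indexed by pairs $(A_1,A_2)\in\mathcal A_1\times\mathcal A_2$. These sets are open by continuity of the flow map. They cover $Q$ because every $x\in A_1$ satisfies $\varphi(T_1,x,\alpha_1(A_1))\in Q$, and $\mathcal A_2$ covers $Q$. On each set we use the concatenated control $\alpha_1(A_1)$ followed by $\alpha_2(A_2)$. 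By the cocycle property $\varphi(t+T_1,x,u\,\&\, v)=\varphi(t,\varphi(T_1,x,u),v)$, trajectories stay in $Q$ on $[0,T_1+T_2]$. The number of distinct concatenated controls is at most $|\mathcal V_{T_1}||\mathcal V_{T_2}|$. Hence
\[
r_{\mathrm{inv}}(T_1+T_2,Q)\le r_{\mathrm{inv}}(T_1,Q)\,r_{\mathrm{inv}}(T_2,Q),
\]
that is, $a(T_1+T_2)\le a(T_1)+a(T_2)$.

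\emph{Step 3: Fekete in continuous time.} Continuous-time Fekete needs $a$ to be bounded on compact intervals. Monotonicity supplies this. If $T\le T'$, restricting controls from a $T'$-spanning set to $[0,T]$ gives a $T$-spanning set, so $r_{\mathrm{inv}}(T,Q)\le r_{\mathrm{inv}}(T',Q)$ and $a$ is nondecreasing. Now fix $\tau>0$ and write $T=k\tau+s$ with $s\in[0,\tau)$. Subadditivity and monotonicity give
\[
a(T)\le k\,a(\tau)+a(\tau),
\]
so $\limsup_{T\to\infty} a(T)/T\le a(\tau)/\tau$. Taking the infimum over $\tau$, together with the trivial bound $\inf_\tau a(\tau)/\tau\le\liminf_{T\to\infty} a(T)/T$, yields the claimed equality.
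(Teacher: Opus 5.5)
Your proposal follows the same route as the paper: concatenation of spanning sets gives subadditivity of $\log r_{\mathrm{inv}}(T,Q)$, and Fekete's lemma finishes, while you also supply details the paper leaves implicit (openness of the refined cover via continuity of the flow, and monotonicity in $T$ to justify the continuous-time Fekete step). The finiteness hypothesis you propose in Step~1 is not needed, because $r_{\mathrm{inv}}\ge 1$ and every inequality in Steps~2--3 remains valid in $[0,\infty]$, so the identity holds without it (as $\infty=\infty$ in the degenerate case).
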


\begin{proof}
Concatenation of $T_1$- and $T_2$-spanning sets yields a
$(T_1+T_2)$-spanning set of cardinality at most
$r_{\mathrm{inv}}(T_1,Q)\,r_{\mathrm{inv}}(T_2,Q)$, so
$\log r_{\mathrm{inv}}(T,Q)$ is subadditive. The result follows
from Fekete's lemma~\cite{colonius2009invariance,kawan2013invariance}.
\end{proof}

Proposition~\ref{prop:finite_horizon_inv_entropy} implies $r_{\mathrm{inv}}(T_s,Q) \ge e^{h_{\mathrm{inv}}(Q)\,T_s}$ for any $T_s > 0$: a larger sampling period demands exponentially more distinct control signals, and no controller can maintain invariance with fewer than $e^{h_{\mathrm{inv}}(Q)\,T_s}$ codes regardless of architecture. In practice, $T_s$ is fixed by the sampling rate, so we focus on control sets defined over this horizon.
\section{Minimal Controller Design via VQ-AEs}

Fix the sampling period $T_s$ and consider a minimal
$T_s$-spanning set
\[
\mathcal V_{T_s}
=
\{u_1,\dots,u_M\}
\subset \mathcal U_{T_s},
\qquad
M = r_{\mathrm{inv}}(T_s,Q).
\]
By definition, there exists an open cover
$\mathcal A=\{A_1,\dots,A_M\}$ of $Q$
and a selection map $\alpha$
such that
\[
\varphi(T_s,A_i,u_i)\subset Q,
\qquad i=1,\dots,M.
\]

A sampled-data controller that enforces invariance
can therefore be constructed as follows:
at time $t_k$, determine an index
$i\in\{1,\dots,M\}$ such that
$x(t_k)\in A_i$ and apply $u_i$
over $[t_k,t_{k+1})$.
Since each $u_i$ keeps all states in $A_i$
inside $Q$ over one sampling period,
recursive application guarantees \eqref{eq:invariance}.
Thus invariance reduces to selecting
an index $i\in\{1,\dots,M\}$ at each sampling time.
Controller design becomes equivalent to
constructing a partition of $Q$
together with an assignment of each partition element
to an open-loop control in $\mathcal U_{T_s}$.
The geometry of the sets $A_i$ is not prescribed,
and determining a partition achieving minimal
cardinality $r_{\mathrm{inv}}(T_s,Q)$
is, in general, nontrivial.
\subsection{Learning a Minimal Control Codebook via VQ-AE}
A constructive approximation of the partition–assignment pair
is obtained using a vector-quantized autoencoder (VQ-AE).
A discrete latent space indexing
$\mathcal Z = \{1,\dots,M\}$ of fixed cardinality $M$
parameterizes the partition of $Q$.
To generate this partition, an encoder network $f_\theta: \mathbb{R}^n \to \mathbb{R}^d$, parametrized by $\theta$, maps each state $x(t_k)$ to a latent vector. Our discrete encoder $E_\theta : \mathbb{R}^n \to \mathcal{Z}$ then assigns each state to the nearest code index in a codebook $\{e_1, \dots, e_M\} \subset \mathbb{R}^d$:
\[
E_\theta(x) = \arg\min_{k \in \mathcal{Z}} \left\| f_\theta(x) - e_k \right\|_2^2.
\]
A decoder $D_\phi : \mathcal{Z} \to \mathcal{U}_{T_s}$, parametrized by $\phi$, maps each code index to a control input over the horizon $T_s$.
Therefore, applied control over $[t_k,t_{k+1})$ is
$
u(\cdot) = D_\phi(E_\theta(x(t_k))).
$
Let $\{x_j\}_{j=1}^N \subset Q$ be sampled states.
For each $x_j$, rollout over one sampling period yields
$\varphi(t,x_j,D_\phi(E_\theta(x_j)))$.
Empirical invariance violation is measured by
\[
\mathcal L_{\mathrm{inv}}
=
\frac{1}{N}
\sum_{j=1}^N
\int_0^{T_s}
\bigl[
\operatorname{dist}
(\varphi(t,x_j,D_\phi(E_\theta(x_j))),Q)
\bigr]_+ dt,
\]
where $[\cdot]_+$ denotes the positive part denoting a minimum distance from a state to the set $Q$. To train the VQ-AE end-to-end, gradients are passed through
the discrete assignment via the straight-through estimator. The codebook vectors are updated using exponential moving averages (EMA) of the encoder outputs assigned to each code, which stabilizes training without requiring explicit codebook
gradients found in standard VQ-AE formulations. A commitment loss
\[
\mathcal{L}_{\mathrm{commit}} = \frac{1}{N} \sum_{j=1}^N \| z_j - \operatorname{sg}(e_{E_\theta(x_j)}) \|_2^2,
\]
where $z_j = f_\theta(x_j)$ and $\operatorname{sg}(\cdot)$ denotes
the stop-gradient operator. This term anchors the encoder outputs to
their assigned codebook vectors.
Consistent with minimizing the number of open-loop
controls, the soft assignment
\[
p_k(x_j) = \frac{\exp\left( -\| z_j - e_k \|_2^2 / \tau \right)}{\sum_{m=1}^M \exp\left( -\| z_j - e_m \|_2^2 / \tau \right)},
\]
where $\tau > 0$ controls the temperature, gives empirical usage $\pi_i = \frac{1}{N} \sum_{j=1}^N p_i(x_j)$ and entropy
$
H(\pi) = -\sum_{i=1}^M \pi_i \log \pi_i.
$
The codebook usage loss is then defined as the weighted entropy:
\[
\mathcal{L}_{\mathrm{codebook}} =  \left(-\sum_{i=1}^M w_i \, \pi_i \log \pi_i\right)^p,
\]
where $p > 1$ concentrates compression pressure on large
codebooks and softens it near collapse. The per-code weights
$w_i$, computed via a sigmoid on each code's empirical safety
rate, shield codes with high violation rates from elimination.
To prevent excessive control effort, a regularization loss
$\mathcal{L}_{\mathrm{reg}}$ penalizes the magnitude of the
learned control sequences. The complete training objective is
\[
\min_{\theta, \phi} \quad \lambda_{\mathrm{inv}} \mathcal{L}_{\mathrm{inv}} + \lambda_{\mathrm{reg}} \mathcal{L}_{\mathrm{reg}} + \lambda_{\mathrm{vq}} \mathcal{L}_{\mathrm{commit}} + \lambda_{pr}\mathcal{L}_{\mathrm{codebook}}.
\]
Training states are drawn from a boundary-biased distribution over $Q$, where a fixed fraction of samples are concentrated near $\partial Q$ and the remainder drawn uniformly from the interior. This overweighting of boundary regions ensures sufficient coverage where invariance violations are most likely to occur. The learned decoder outputs $\{D_\phi(i)\}_{i=1}^M$ which is an upper bound on $r_{\mathrm{inv}}(T_s, Q)$, with tightness governed by the richness of the admissible control and partition sets, while the encoder induces the associated partition of $Q$. Since the space of open-loop controls on $[0,T_s]$ is
infinite-dimensional, we consider the decoder outputs to be piecewise-constant
controls with $H$ segments of duration $\bar T_s$
satisfying $H\bar T_s = T_s$. The resulting controller becomes
\begin{equation}
\label{eq:controller}
\begin{aligned}
\mathcal C(x(t_k)) &= u_{E_\theta(x(t_k))},\qquad
t\in[t_k+j\bar T_s,\,t_k+(j+1)\bar T_s),\\
A_i &= \{x\in Q : E_\theta(x)=i\}, \qquad i=1,\dots,M,\\
u_i &= D_\phi(i)=\{u_i^j\}_{j=0}^{H-1}.
\end{aligned}
\end{equation}
where $T_s=H\bar T_s$ and each $u_i^j$ is constant.

\section{Correctness of the Learned Controller}
Each index $i \in \{1,\dots,M\}$ is associated with a fixed
open-loop control sequence $u_i = D_\phi(i) \in \mathcal U_{T_s}$,
yielding a control family $\mathcal V_{T_s}$. The main difficulty in verification lies in the geometry of the
state-space partition
$
A_i := \{x \in Q : E_\theta(x)=i\}.
$
For nonlinear encoders these regions may have implicit and
nonconvex boundaries, making explicit characterization difficult. To obtain an analytic description of the partition, we define the encoder
to be an invertible affine map $T(x)=Wx+b$, where
$W\in\mathbb{R}^{n\times n}$ and $\det W \neq 0$. We use the matrix exponential parametrization to enforce an invertible mapping~\cite{casado2019orthogonal}. Nearest-neighbor code selection in latent space induces
Voronoi cells in the latent coordinates, which are convex polytopes.
Since each region is the preimage $A_i = T^{-1}(V_i)$ of a Voronoi
cell and $T$ is affine, each $A_i$ is a convex polytope, i.e., a
finite intersection of halfspaces $\{\alpha_j^\top x \leq \beta_j\}$.

The set $Q$ introduced in the problem formulation is chosen as a
quadratic sublevel set
\[
Q = \{x \in \mathbb{R}^n : V(x) \le \rho\}, \qquad
V(x) = x^\top P x,
\]
with $P \succ 0$. The matrix $P$ is obtained from the LQR design
associated with the linearization of \eqref{eq:plant} around the
equilibrium. For nonlinear systems, LQR controllers typically stabilize the dynamics
locally around the equilibrium, and therefore the corresponding Lyapunov
sublevel sets provide natural candidates for invariant regions. The set
$Q$ is chosen within this neighborhood. This choice is justified a
posteriori, as we later verify that the learned controller preserves
invariance of $Q$.

Verification proceeds region-wise over the learned partition.
Each region $A_i$ is associated with a control sequence
$\{u_i^j\}_{j=0}^{H-1}$ applied over intervals of length $\bar T_s$.
Initialization of $A_i \cap Q$ occurs in Line~\ref{alg:init}, and the centroid
$c$ is computed in Line~\ref{alg:centroid}. The Gronwall-based enclosure
requires a Lipschitz constant valid over the entire reachable tube, not just
the initial region, since trajectories may leave $A$ over $[0,\bar T_s]$.
\textsc{ComputeL} addresses this by iteratively expanding the enclosure:
at each iteration, $L$ defines a reachable set $B$ via rollouts and a
polyhedral over-approximation $\tilde A$, over which a Lipschitz estimate
$\hat L$ is recomputed. The bound is updated as
$L \leftarrow \max(L+\varepsilon,\hat L)$ and the process terminates when
$\hat L \le L$, ensuring validity over a region containing the reachable tube.
Since $f$ is globally Lipschitz, $\hat L$ is bounded while $L$ grows by at
least $\varepsilon$, so termination is finite. The enclosure $\tilde A$ is
intersected with $Q$ in Line~\ref{alg:intersect}, after which an SOS
feasibility problem is solved in Line~\ref{alg:sos}.

\subsection{Sum-of-Squares Certification}

For each region $A_i = \{x : a_{il}^\top x \le \beta_{il},\ l=1,\dots,m_i\}$ and control segment $u_i^j$, invariance is certified by enforcing $\dot V(x) \le 0$ on $A_i \cap Q$ via the S-procedure ~\cite{parrilo2003semidefinite}. SOS verification requires $\nabla V(x)^\top f(x, u_i^j)$ to be polynomial in $x$; since the true dynamics contain trigonometric nonlinearities, we replace $f$ with a polynomial surrogate $\tilde{f}$ obtained by 3rd-order Taylor expansion around the equilibrium, justified by the choice of $Q$ as a near-equilibrium set with small attitude deviations. Define $g_l(x) = \beta_{il} - a_{il}^\top x$ and $g_Q(x) = \rho - x^\top Px$. The condition $\dot V(x) = \nabla V(x)^\top \tilde{f}(x, u_i^j) \le 0$ on $A_i \cap Q$ is enforced by requiring
\[
-\dot V(x)
- \sum_{l=1}^{m_i} \lambda_{il}(x)\, g_l(x)
- \lambda_Q(x)\, g_Q(x)
\]
to be a sum-of-squares polynomial, with SOS multipliers $\lambda_{il}(x)$, $\lambda_Q(x)$. Feasibility certifies that surrogate-model trajectories starting in $A_i \cap Q$ remain inside $Q$ under $u_i^j$.

\begin{algorithm}[t]
\caption{\textsc{ComputeL}}
\begin{algorithmic}[1]
\State \textbf{Input:} $A$, $c$, $u$, $\bar T_s$, $\varepsilon$
\State $\tilde A \gets A$, \quad $L \gets 0$
\Loop
    \State $\hat L \gets \operatorname{Lip}(f(\cdot,u),\tilde A)$
    \State \textbf{if} $\hat L \le L$ \textbf{then return} $(L,\tilde A)$
    \State $L \gets \max(L+\varepsilon,\hat L)$
    \State $B \gets \bigcup_{t\in[0,\bar T_s]} \left\{\varphi(t,c,u) + e^{L\bar T_s}(x-c) : x\in A\right\}$
    \State $\beta_l \gets \max_{y\in B} a_l^\top y,\quad l=1,\dots,m$
    \State $\tilde A \gets \{x : a_l^\top x \le \beta_l,\ l=1,\dots,m\}$
\EndLoop
\end{algorithmic}
\end{algorithm}
\begin{algorithm}[t]
\caption{Iterative Forward Certification}
\begin{algorithmic}[1]
\State \textbf{Input:} $\{(A_i,\{u_i^j\}_{j=0}^{H-1})\}_{i=1}^M$, $Q$, $\bar T_s$, $\varepsilon$
\For{$i=1,\dots,M$}
    \State \label{alg:init} $A \gets A_i \cap Q$
    \For{$j=0,\dots,H-1$}
        \State \label{alg:centroid} $c \gets \operatorname{centroid}(A)$
        \State \label{alg:computeL} $(L, \tilde A) \gets \textsc{ComputeL}(A,c,u_i^j,\bar T_s,\varepsilon)$
        \State \label{alg:intersect} $A \gets \tilde A \cap Q$
        \State \label{alg:sos} \textbf{if} SOS$(A,u_i^j)$ is infeasible \textbf{then return} \textsc{Fail}
    \EndFor
\EndFor
\State \Return \textsc{Success}
\end{algorithmic}
\end{algorithm}
\begin{theorem}
\label{lem:certified_controller}
Suppose there exist $H \in \mathbb{N}_{>0}$, $\bar T_s > 0$, and  $Q \subset \mathbb{R}^n$
such that Algorithm~2 completes, then the closed-loop system
$
\dot x(t) = \tilde{f}\bigl(x(t), \mathcal C(x(t))\bigr),
$
where $\mathcal C$ is the constructed controller defined in~\eqref{eq:controller},
is forward invariant on $Q$.
\end{theorem}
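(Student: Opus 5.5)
The argument is an induction over sampling periods, and within each period over the $H$ control segments. It rests on an invariant maintained by Algorithm~2: for each region $i$ and each segment index $j$, the set $A$ held at the start of iteration $j$ contains every state reachable at time $t_k+j\bar T_s$ from $A_i\cap Q$ under the applied prefix $u_i^0,\dots,u_i^{j-1}$. Once every segment is certified, forward invariance on one period $[t_k,t_{k+1}]$ follows. Because the sets $A_i$ cover $Q$ (every $x\in Q$ has some nearest code, so $E_\theta(x)$ is defined), chaining periods gives invariance for all $t\ge 0$.

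\textbf{Segment-level enclosure.} Fix $i$ and $j$, and let $A$ be the current set with centroid $c$. By the termination condition of \textsc{ComputeL}, $L\ge \operatorname{Lip}(f(\cdot,u_i^j),\tilde A)$, and the enclosure $\tilde A$ contains the bloated tube $B$. I would argue by a continuity (first-exit) argument. Suppose a trajectory starting in $A$ first leaves $\tilde A$ at time $\tau\le\bar T_s$. On $[0,\tau]$ both this trajectory and the centroid trajectory stay in a region where $L$ is a valid Lipschitz constant. Gronwall then gives
\[
\|\varphi(t,x,u)-\varphi(t,c,u)\|\le e^{Lt}\|x-c\|\le e^{L\bar T_s}\|x-c\|.
\]
Strictly, the algorithm's $B$ uses the scaled point $c+e^{L\bar T_s}(x-c)$ instead of a ball. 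I would therefore interpret $B$ as the union of balls of radius $e^{L\bar T_s}\|x-c\|$ around $\varphi(t,c,u)$ (the intended Gronwall enclosure), or note that the support-function computation of $\beta_l$ must be taken over that ball-inflated set. With this reading the trajectory lies strictly inside the polytope, which contradicts the exit at $\tau$. Hence the whole tube over $[0,\bar T_s]$ lies in $\tilde A$.

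\textbf{Staying in $Q$.} SOS feasibility on $A=\tilde A\cap Q$ gives $\dot V\le 0$ along $\tilde f(\cdot,u_i^j)$ on that set, where $V(x)=x^\top Px$. The set $Q$ is the sublevel set $\{V\le\rho\}$ of $V$, and the previous step shows trajectories stay in $\tilde A$ throughout the segment. A trajectory can then exit $Q$ only by crossing $V=\rho$ while inside $\tilde A$, and $\dot V\le 0$ there rules this out. Here one needs a standard barrier/comparison argument: let $t^*$ be the last time with $V\le\rho$, and use the nonpositive derivative on $\tilde A\cap Q$ to show $V$ cannot increase past $\rho$. So the segment's end states and its entire tube lie in $\tilde A\cap Q$, which is exactly the next $A$ in Line~\ref{alg:intersect}, and the inductive invariant is preserved.

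\textbf{Closing and main obstacle.} Induction over $j=0,\dots,H-1$ puts the trajectory in $Q$ throughout $[t_k,t_{k+1}]$. Re-encoding at $t_{k+1}$ selects some $A_{i'}\ni x(t_{k+1})$ because the $A_i$ cover $Q$, and induction over $k$ completes the proof. The main obstacle is the enclosure step. The Lipschitz constant $L$ is only certified over $\tilde A$, and the SOS certificate only covers $\tilde A\cap Q$. The argument is therefore circular unless it is framed as a first-exit-time contradiction, using continuity of solutions and some strict slack, e.g.\ the $\varepsilon$-increment or the closedness of the polytope. Another concern is that the rollouts and the Lipschitz bound in \textsc{ComputeL} must refer to the same vector field $\tilde f$ used in the theorem. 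I would state this explicitly, applying the algorithm to $\tilde f$ or assuming $\operatorname{Lip}$ returns a bound valid for $\tilde f$.
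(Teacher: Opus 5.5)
Your argument is essentially the paper's: a Gronwall enclosure of each segment's tube in $\tilde A$, then the SOS certificate with an exit-time argument to keep trajectories in $Q$, then forward chaining over the $H$ segments and over sampling periods using that the $A_i$ cover $Q$. Your caveats are well founded and point to places where the paper's own proof is loose: it infers $\varphi(t,A,u_i^j)\subseteq B$ for the scaled-copy $B$ directly from a norm bound, it applies the Lipschitz constant on $\tilde A$ without a first-exit argument, and it mixes $f$ in \textsc{ComputeL} with $\tilde f$ in the theorem, so your ball-inflated $B$, first-exit framing and use of $\tilde f$ throughout are repairs rather than departures (although the exit-from-$Q$ step, in both your version and the paper's, still needs a genuine comparison or Nagumo-type argument, since $\dot V\le 0$ at the crossing instant alone does not stop $V$ from exceeding $\rho$).
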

\begin{proof}
Assume Algorithm~2 completes. We show that under each control segment, the Lipschitz-based reachable-set enclosure combined with the SOS certificate confines trajectories to $Q$, even though the polyhedral overapproximation $\tilde A$ may extend beyond $Q$.

Consider a region $A\subset Q$ at the beginning of a segment of length $\bar T_s$ under control $u_i^j$, and let $c=\operatorname{centroid}(A)$. Let $(L,\tilde A)$ denote the output of \textsc{ComputeL}. The sequence $\{L_k\}$ is non-decreasing. Since $f$ is globally Lipschitz, $\hat L$ is bounded, while the update $L \leftarrow \max(L+\varepsilon,\hat L)$ ensures $L$ increases by at least $\varepsilon$ at each unsuccessful iteration, so the loop terminates finitely. Upon termination, $\hat L \le L$ ensures $\mathrm{Lip}(f(\cdot, u_i^j), \tilde{A}) \le L$.

Since $L$ is a valid Lipschitz constant on $\tilde{A}$, the Bellman Gronwall inequality gives, for any $x\in A$,
\[
\begin{aligned}
    \|\varphi(t,x,u_i^j)-\varphi(t,c,u_i^j)\|& \le e^{Lt}\|x-c\|\\
    & \le e^{L\bar T_s}\|x-c\|,\ \forall\, t\in[0,\bar T_s].
\end{aligned}
\]
Hence $\varphi(t,A,u_i^j)\subseteq B$ for all $t\in[0,\bar T_s]$, where
\[
B=\bigcup_{t\in[0,\bar T_s]} \bigl\{\varphi(t,c,u_i^j) + e^{L\bar T_s}(x-c):x\in A\bigr\}.
\]
The polyhedral enclosure $\tilde A=\{x: a_{il}^\top x\le \beta_{il},\; l=1,\dots,m_i\}$ with $\beta_{il}=\max_{y\in B} a_{il}^\top y$ satisfies $B\subseteq\tilde A$ by construction, so $\varphi([0,\bar T_s],A,u_i^j)\subseteq B\subseteq\tilde A$.

The region is updated as $A\leftarrow \tilde A\cap Q$, and feasibility of the SOS program guarantees $\dot V(x)\le0$ for all $x\in \tilde A\cap Q$. It remains to show that this certificate, established only on $\tilde A\cap Q$, suffices to prevent trajectories from reaching the portion of $\tilde A$ outside $Q$. Suppose for contradiction that some $x_0\in A\cap Q$ satisfies $V(\varphi(t^*,x_0,u_i^j))>\rho$ for some $t^*\in(0,\bar T_s]$. Let $t_e=\inf\{t>0:V(\varphi(t,x_0,u_i^j))>\rho\}$. For all $t\in[0,t_e]$ the trajectory satisfies $V(\varphi(t,x_0,u_i^j))\le\rho$, hence lies in $\tilde A\cap Q$. The SOS certificate gives $\dot V\le 0$ along the trajectory, so $V(\varphi(t_e,x_0,u_i^j))\le V(x_0)\le\rho$, contradicting the definition of $t_e$. Therefore any trajectory starting in $A\cap Q$ remains in $Q$ over $[0,\bar T_s]$.

The argument extends to all segments by forward chaining over $j = 0, \dots, H-1$: at the start of each segment the current region satisfies $A \subset Q$, and the preceding argument shows that trajectories remain in $Q$ and lie within $\tilde{A} \cap Q$, which Algorithm~2 passes as the input region to the next segment (Line~8). Thus trajectories starting in $A_i$ remain in $Q$ over $[t_k, t_{k+1})$. Since $\{A_i\}_{i=1}^M$ partitions $Q$, every $x(0) \in Q$ belongs to some region, and the closed-loop system is forward invariant on $Q$.
\end{proof}

\section{Empirical Results}
We consider the nonlinear quadrotor \cite{mahony2012multirotor} with state $x = [p_x, p_y, p_z, v_x, v_y, v_z, \phi, \theta, \psi, \omega_x, \omega_y, \omega_z]^\top \in \mathbb{R}^{12}$ consisting of position, linear velocity, Euler angles, and body angular rates, and control input $u = [T, \tau_x, \tau_y, \tau_z]^\top \in \mathbb{R}^4$, where $T$ is thrust and $\tau$ is torque. The dynamics $\dot{x} = f(x,u)$ are
\begin{equation}
\label{eq:quadrotor_dynamics}
\begin{aligned}
\dot{p} &= v, \quad
\dot{v} = \tfrac{T}{m} R(\phi,\theta,\psi)\, e_3 - g e_3, \\
\dot{\eta} &= W(\phi,\theta)\, \omega, \quad
\dot{\omega} = I^{-1}\!\left( -\omega \times I\omega + \tau \right).
\end{aligned}
\end{equation}
where $m$ is the vehicle mass, $g$ is gravitational acceleration, $I$ is the inertia matrix, $e_3 = [0,0,1]^\top$ is the vertical unit vector, $R(\phi,\theta,\psi) \in SO(3)$ is the rotation matrix and $W(\phi,\theta)$ is the Euler angle kinematic matrix. The training process and closed-loop simulations use the polynomial surrogate $\tilde{f}$, so that the reported invariance rates are directly comparable to the IFC certificate established in Theorem~\ref{lem:certified_controller}. The sublevel parameter is set to $\rho = 5$, the sampling
period is $T_s = H \bar{T}_s$ with $\bar{T}_s = 0.01$\,s and
$H = 2$, and the VQ-AE is initialized with $M = 8192$ codes.

We justify the polynomial surrogate $\tilde{f}$ by evaluating
$|V(\varphi_f(t)) - V(\varphi_{\tilde{f}}(t))|$ at every integration
step over $[0, T_s]$ with $T_s = 0.02$\,s across $10^7$ trajectories
from states sampled uniformly in $Q$ under admissible controls. The
99th-percentile gap is $3.0 \times 10^{-4}$ ($0.006\%$ of $\rho$),
with a maximum of $7.0 \times 10^{-3}$ ($0.14\%$ of $\rho$),
indicating that the surrogate approximation error is negligible.

Closed-loop invariance is evaluated by sampling $10^5$ initial states uniformly from $Q$, simulating the closed-loop system for 10\,s, and recording a trajectory as a failure if it leaves $Q$ at any point. After training, rarely used codes are pruned and states are reassigned to their nearest surviving code, reducing the number of regions that must be verified. We investigate three questions: (i)~how far can the codebook
be compressed before invariance breaks down
(Section~\ref{sec:compression}), (ii)~how does the learned
controller compare to baselines
(Section~\ref{sec:baselines}), and (iii)~what is the coarsest
image resolution at which a vision-based observer can still
support the controller (Section~\ref{sec:resolution}).
\begin{figure}[ht]
    \centering
    \includegraphics[width=0.75\linewidth]{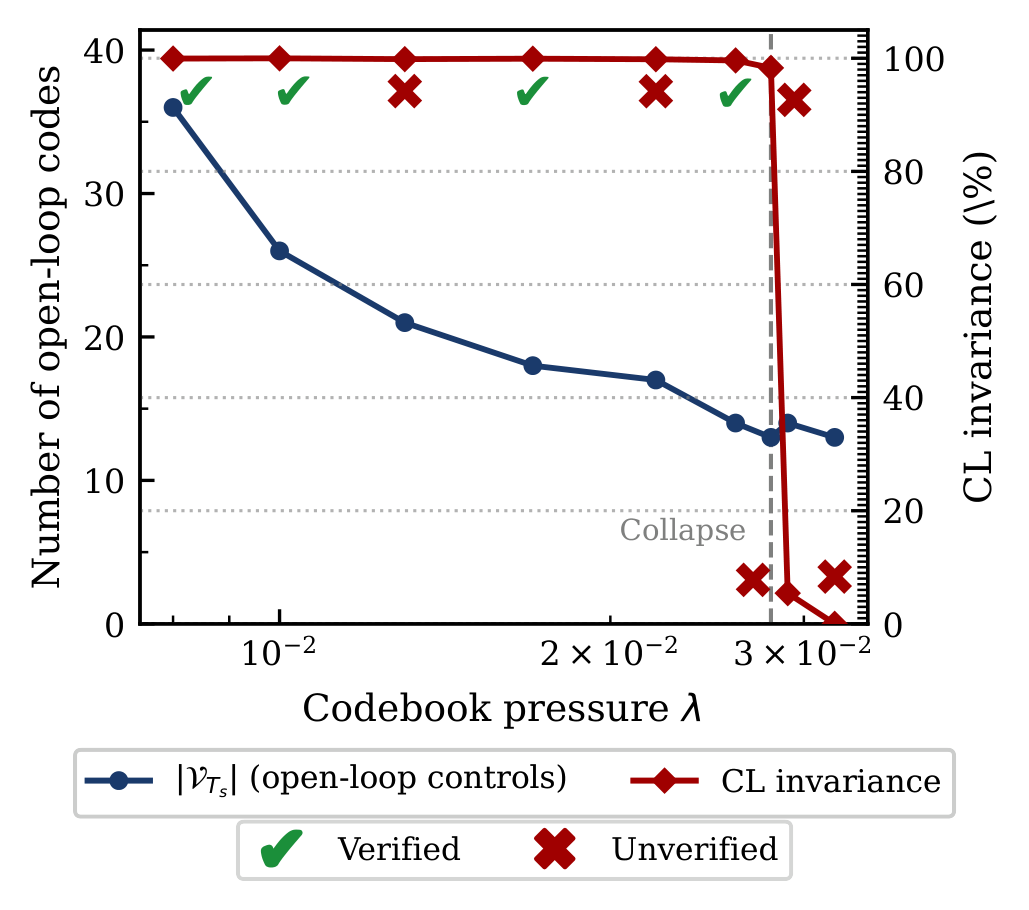}
    \caption{Codebook size $|\mathcal{V}_{T_s}|$ and closed-loop invariance vs.\ codebook pressure $\lambda_{pr}$. Invariance holds at $100\%$ down to $|\mathcal{V}_{T_s}|=14$ codes, then collapses. Green checkmarks indicate IFC certification success; all controllers at or beyond the collapse point are unverified.}
    \label{fig:lambda_sweep}
\end{figure}
\subsection{Codebook Compression}\label{sec:compression} We stress-test the pipeline by varying the entropy pressure $\lambda_{pr}$ that penalizes codebook usage. Fig.~\ref{fig:lambda_sweep} shows the result: as $\lambda_{pr}$ increases, the number of active open-loop controls $|\mathcal{V}_{T_s}|$ decreases monotonically from 36 to 14 while closed-loop invariance remains near $100\%$. Below 14 codes, invariance collapses to below $10\%$, suggesting that the invariance task for this system and choice of $Q$ requires at least $\sim\!14$ distinct open-loop signals. This empirical bound has the inductive bias of the architecture; the true $r_{\mathrm{inv}}(T_s, Q)$ may be lower under a richer control set. We also observe that even at higher code counts, a small number of closed-loop trajectories violate invariance in simulation (up to $0.01\%$), while the IFC pipeline successfully certifies these controllers. We attribute this gap to accumulating integration error of the RK4 scheme; the open-loop invariance violation rate remains below $3.6 \times 10^{-5}\%$. IFC certification is consistent with the observed safety trend. All controllers 
at or beyond the collapse point ($\lambda \geq 0.028$) are unverified, consistent 
with empirical invariance degradation. Among the remaining controllers, those with 
the most codes tend to be certified, while some intermediate cases are not. Since 
IFC provides a sufficient condition for invariance via reachable-set overapproximation, 
unverified controllers are not necessarily unsafe; certification failure indicates 
only that the overapproximation was conservative to confirm invariance.

\begin{table}[ht]
\centering
\caption{Comparison of codebook size and closed-loop invariance rate across baselines, evaluated over $10^5$ trials for 10\,s.}
\label{tab:baselines}
\small
\begin{tabular}{|l|r|r|}
\hline
Method & $|\mathcal{V}_{T_s}|$ & CL Inv.\ (\%) \\
\hline
VQ-AE (ours, pruned)   &  26 & 99.979 \\
\hline
VQ-AE (ours, unpruned) & 171 & 99.989 \\
\hline
VQ-AE partition + OL LQR  & 171 & 0.000 \\
\hline
Uniform grid + NN       & 4096 & 99.975 \\
\hline
Uniform grid + CL LQR & 2.2M & 76.870 \\
\hline
\end{tabular}
\end{table}

\subsection{Baseline Comparisons}\label{sec:baselines}

We compare the proposed controller, trained with codebook pressure $\lambda=0.01$ and evaluated over $10^5$ initial conditions, against four alternatives. Methods (i) and (ii) correspond to our approach: (i)~\emph{VQ-AE (pruned)} and (ii)~\emph{VQ-AE (unpruned)}, where the pruned version discards rarely used codes and reassigns states to the nearest remaining code. The baselines are (iii)~\emph{VQ-AE partition + OL LQR}, which retains the learned partition but replaces the learned decoder outputs with a open-loop LQR rollout for $T_s = 0.02\,s$ initialized at the region centroid; (iv)~\emph{uniform grid + NN}, which partitions each of the 12 state dimensions into 2 intervals ($2^{12}=4096$ cells) and assigns each cell an open-loop control via an independently trained neural network; and (v)~\emph{uniform grid + CL LQR}, which partitions each dimension into 5 intervals ($5^{12}\approx 244$M cells), retains only the ${\sim}2.2$M cells intersecting $Q$, and applies closed-loop LQR with state re-sampling at each $\bar{T}_s$ interval, in contrast to the open-loop rollout used in (iii).

All methods are evaluated over $10^5$ initial conditions sampled uniformly from $Q$, with 10\,s simulations. Table~\ref{tab:baselines} shows that (ii) slightly outperforms (iii) while using approximately $24\times$ fewer codes (171 vs.\ 4096). After pruning, (i) uses 26 codes, achieving a $157\times$ reduction relative to (iv) while still maintaining higher closed-loop invariance. Method (v), despite using over $2.2$ million partitions, achieves substantially lower invariance. We attribute this to quantization error: even with ${\sim}2.2$M cells, the within-cell state variation exceeds the tolerance of the LQR controller over each $\bar{T}_s$ interval, particularly near $\partial Q$. Remarkably, neither the partition nor the controller alone is sufficient: (iii) fails despite using the learned partition, and (v) underperforms despite using a dense partition with closed-loop control. In contrast, the joint encoder--decoder structure in (i) achieves higher invariance with nearly $10^5\times$ fewer regions, highlighting the importance of co-designing the partition and control signals.

\subsection{Sensing Resolution and Invariance}\label{sec:resolution}
Since $\mathcal{C}$ maps states to discrete indices $i \in \{1,\dots,M\}$,
the observer need not reconstruct the full state; it suffices to identify
which region $A_i$ the current state belongs to. This raises a natural
question: what is the coarsest image resolution at which the observer can
still support invariance?

We fix the same controller used for baselines comparison and the observer
network $\mathcal{O}_\psi$ is trained to estimate the state from $r \times r$
grayscale images of an ArUco marker~\cite{garrido2014automatic} rendered at
the quadrotor's current pose. At each sampling time $t_k$, the observer processes the $K$-step observation history with $K=4$
$\mathbf{y}_k := (y_{k-K}(r), \ldots, y_k(r))$,
where $y_k(r) := h(x(t_k), r) \in \mathbb{R}^{r \times r}$,
and produces a state estimate $\hat{x}_k = \mathcal{O}_\psi(\mathbf{y}_k)$
passed to $\mathcal{C}$ to yield the composite mapping
$g = \mathcal{C} \circ \mathcal{O}_\psi$.

\begin{figure}[ht]
    \centering
    \includegraphics[width=0.7\linewidth]{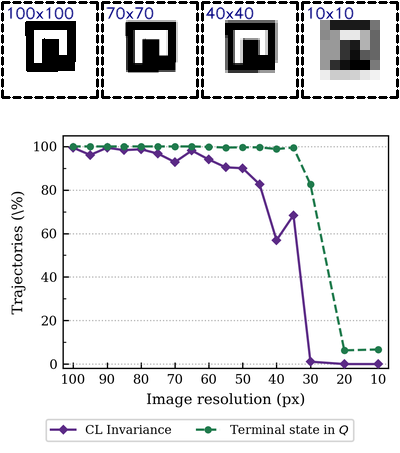}
    \caption{Outputs at different resolutions (indicated above). Closed-loop invariance and terminal safety over 10\,s versus image resolution. Invariance remains above $99\%$ down to $\sim65$\,px, then degrades with lower resolution.}
    \label{fig:resolution_sweep}
\end{figure}

The observer uses a ResNet-18 \cite{targ2016resnet} backbone applied independently to each image
in the history, followed by an MLP head that fuses per-image features into
a state estimate. Only $\psi$ is optimized; $E_\theta$ and $D_\phi$ are
frozen. Training combines a mean squared state reconstruction loss with a binary cross-entropy loss that penalizes region misidentification. Correct code selection matters because it ensures the true state lies in a partition region with a certified control, maintaining invariance even when the state estimate is noisy. The reconstruction loss is necessary alongside classification: without spatial grounding, the classification loss alone is fragile and cannot distinguish between misclassifying to a neighboring region versus a distant one, so the observer fails to learn the geometric structure of the partition. For each candidate resolution $r$, the observer is
independently retrained and the closed-loop system is evaluated over 10\,s
from 1000 initial states sampled uniformly from $Q$.

Fig.~\ref{fig:resolution_sweep} reports two metrics as resolution decreases:
the fraction of trajectories that never leave $Q$ (closed-loop invariance),
and the fraction that end inside $Q$ at the terminal time of each open-loop
segment (terminal safety). Both metrics are reported because vision-based
control is susceptible to transient misclassifications that cause brief
departures from $Q$ without diverging. Closed-loop invariance remains near $99\%$ for resolutions down to approximately $r=65$\,px, then degrades progressively. At $r=40$\,px roughly $58\%$ of trajectories maintain invariance; the controller fails almost entirely below $r=30$\,px. The terminal safety curve is substantially more robust, staying above $99\%$ down to approximately $r=30$\,px, indicating that trajectories which leave $Q$ frequently return by the end of the open-loop control execution.

\section{Conclusion}
This paper studied the problem of constructing low-complexity controllers that certify forward invariance of a compact set under sampled-data control. A VQ-AE architecture learns a state-space partition and control codebook, and an iterative forward certification algorithm verifies correctness via Lipschitz-based reachable-set enclosures and SOS feasibility. On a 12-dimensional nonlinear quadrotor, the learned controller maintains invariance with 14 control signals, a $157\times$ reduction over a uniform grid baseline, and vision-based experiments show that closed-loop invariance is preserved down to approximately $65$\,px resolution.

Future work includes generalizing the verification framework from polynomial surrogate dynamics to broader model classes, extending the encoder beyond affine maps while preserving certifiable partition geometry, and establishing theoretical lower bounds on codebook size.
\balance
\bibliography{software}

@article{colonius2012stabilization,
    author = {Colonius, Fritz},
    title = {Minimal Bit Rates and Entropy for Exponential Stabilization},
    journal = {SIAM Journal on Control and Optimization},
    volume = {50},
    number = {5},
    pages = {2988-3010},
    year = {2012},
    doi = {10.1137/110829271}
}

@article{colonius2009invariance,
  title={Invariance entropy for control systems},
  author={Colonius, Fritz and Kawan, Christoph},
  journal={SIAM Journal on Control and Optimization},
  volume={48},
  number={3},
  pages={1701--1721},
  year={2009},
  publisher={SIAM}
}

@article{kawan2013invariance,
  title={Invariance entropy for deterministic control systems},
  author={Kawan, Christoph},
  journal={Lecture notes in mathematics},
  volume={2089},
  year={2013},
  publisher={Springer}
}

@article{mahony2012multirotor,
  title={Multirotor aerial vehicles: Modeling, estimation, and control of quadrotor},
  author={Mahony, Robert and Kumar, Vijay and Corke, Peter},
  journal={IEEE robotics \& automation magazine},
  volume={19},
  number={3},
  pages={20--32},
  year={2012},
  publisher={IEEE}
}

@article{han2015learning,
  title={Learning both weights and connections for efficient neural network},
  author={Han, Song and Pool, Jeff and Tran, John and Dally, William},
  journal={Advances in neural information processing systems},
  volume={28},
  year={2015}
}

@article{frankle2018lottery,
  title={The lottery ticket hypothesis: Finding sparse, trainable neural networks},
  author={Frankle, Jonathan and Carbin, Michael},
  journal={arXiv preprint arXiv:1803.03635},
  year={2018}
}

@inproceedings{atanov2024far,
  title={How Far Can a 1-Pixel Camera Go? Solving Vision Tasks Using Photoreceptors and Computationally Designed Visual Morphology},
  author={Atanov, Andrei and Fu, Jiawei and Singh, Rishubh and Yu, Isabella and Spielberg, Andrew and Zamir, Amir},
  booktitle={European Conference on Computer Vision},
  pages={458--476},
  year={2024},
  organization={Springer}
}

@inproceedings{klotz2024minimalist,
  title={Minimalist vision with freeform pixels},
  author={Klotz, Jeremy and Nayar, Shree K},
  booktitle={European Conference on Computer Vision},
  pages={329--346},
  year={2024},
  organization={Springer}
}

@article{van2017neural,
  title={Neural discrete representation learning},
  author={Van Den Oord, Aaron and Vinyals, Oriol and others},
  journal={Advances in neural information processing systems},
  volume={30},
  year={2017}
}

@article{brunke2022safe,
  title={Safe learning in robotics: From learning-based control to safe reinforcement learning},
  author={Brunke, Lukas and Greeff, Melissa and Hall, Adam W and Yuan, Zhaocong and Zhou, Siqi and Panerati, Jacopo and Schoellig, Angela P},
  journal={Annual Review of Control, Robotics, and Autonomous Systems},
  volume={5},
  pages={411--444},
  year={2022},
  publisher={Annual Reviews}
}

@book{beard2012small,
  title={Small unmanned aircraft: Theory and practice},
  author={Beard, Randal W and McLain, Timothy W},
  year={2012},
  publisher={Princeton university press}
}

@article{zhang2024end,
  title={End-to-end nano-drone obstacle avoidance for indoor exploration},
  author={Zhang, Ning and Nex, Francesco and Vosselman, George and Kerle, Norman},
  journal={Drones},
  volume={8},
  number={2},
  pages={33},
  year={2024},
  publisher={MDPI}
}

@article{bengio2013estimating,
  title={Estimating or propagating gradients through stochastic neurons for conditional computation},
  author={Bengio, Yoshua and L{\'e}onard, Nicholas and Courville, Aaron},
  journal={arXiv preprint arXiv:1308.3432},
  year={2013}
}

@inproceedings{wang2026theoretically,
  title={A Theoretically-Grounded Codebook for Digital Semantic Communications},
  author={Wang, Lingyi and Shelim, Rashed and Saad, Walid and Ramakrishnan, Naren},
  booktitle={2026 IEEE 23rd Consumer Communications \& Networking Conference (CCNC)},
  pages={1--6},
  year={2026},
  organization={IEEE}
}

@article{parrilo2003semidefinite,
  title={Semidefinite programming relaxations for semialgebraic problems},
  author={Parrilo, Pablo A},
  journal={Mathematical programming},
  volume={96},
  number={2},
  pages={293--320},
  year={2003},
  publisher={Springer}
}

@article{garrido2014automatic,
  title={Automatic generation and detection of highly 
         reliable fiducial markers under occlusion},
  author={Garrido-Jurado, S. and Mu{\~n}oz-Salinas, R. 
          and Madrid-Cuevas, F.J. and Mar{\'\i}n-Jim{\'e}nez, M.J.},
  journal={Pattern Recognition},
  volume={47},
  number={6},
  pages={2280--2292},
  year={2014},
  publisher={Elsevier}
}

@article{falanga2020dynamic,
  title={Dynamic obstacle avoidance for quadrotors with event cameras},
  author={Falanga, Davide and Kleber, Kevin and Scaramuzza, Davide},
  journal={Science Robotics},
  volume={5},
  number={40},
  pages={eaaz9712},
  year={2020},
  publisher={American Association for the Advancement of Science}
}

@article{targ2016resnet,
  title={Resnet in resnet: Generalizing residual architectures},
  author={Targ, Sasha and Almeida, Diogo and Lyman, Kevin},
  journal={arXiv preprint arXiv:1603.08029},
  year={2016}
}

@inproceedings{liberzon2016entropy,
  title={Entropy and minimal data rates for state estimation and model detection},
  author={Liberzon, Daniel and Mitra, Sayan},
  booktitle={Proceedings of the 19th international conference on hybrid systems: Computation and control},
  pages={247--256},
  year={2016}
}

@inproceedings{sibai2018state,
  title={State estimation of dynamical systems with unknown inputs: Entropy and bit rates},
  author={Sibai, Hussein and Mitra, Sayan},
  booktitle={Proceedings of the 21st International Conference on Hybrid Systems: Computation and Control (part of CPS Week)},
  pages={217--226},
  year={2018}
}

@article{casado2019orthogonal,
  author       = {Mario Lezcano Casado and
                  David Mart{\'{\i}}nez{-}Rubio},
  title        = {Cheap Orthogonal Constraints in Neural Networks: {A} Simple Parametrization
                  of the Orthogonal and Unitary Group},
  journal      = {CoRR},
  volume       = {abs/1901.08428},
  year         = {2019},
  url          = {http://arxiv.org/abs/1901.08428},
  eprinttype   = {arXiv},
  eprint       = {1901.08428},
  bibsource    = {dblp computer science bibliography, https://dblp.org}
}

@inproceedings{hespanha2002towards,
  title={Towards the control of linear systems with minimum bit-rate},
  author={Hespanha, Joao and Ortega, Antonio and Vasudevan, Lavanya},
  booktitle={Proc. 15th Int. Symp. on Mathematical Theory of Networks and Systems (MTNS)},
  pages={1--15},
  year={2002}
}
\end{document}